\documentclass[journal,two columns]{IEEEtran}
\usepackage{amsfonts}
\usepackage{amsmath,amssymb}
\usepackage{graphicx,subfigure}
\usepackage{color}      
\usepackage{epsfig}
\usepackage{amssymb}

\newtheorem{theorem}{Theorem}[section]
\newtheorem{lemma}[theorem]{Lemma}

\newtheorem{corollary}[theorem]{Corollary}

\pagestyle{plain}
\begin{document}

\title{An achievable region for the double unicast problem based on a minimum cut analysis}

\author{\IEEEauthorblockN{Shurui Huang, {\it{Student Member, IEEE}} and Aditya Ramamoorthy, {\it{Member, IEEE}}\\
}
\thanks{This research was supported in part by NSF grants CCF-1018148 and CNS-0721453.}
\authorblockA{Department of Electrical and Computer Engineering\\
Iowa State University, Ames, Iowa 50011\\
Email: \{hshurui, adityar\}@iastate.edu}}

\maketitle
\thispagestyle{empty}
\pagestyle{empty}
\begin{abstract}
We consider the multiple unicast problem under network coding over
directed acyclic networks when there are two source-terminal pairs,
$s_1-t_1$ and $s_2-t_2$. Current characterizations of the multiple
unicast capacity region in this setting have a large number of
inequalities, which makes them hard to explicitly evaluate. In
this work we consider a slightly different problem. We assume that
we only know certain minimum cut values for the network, e.g.,
mincut$(S_i, T_j)$, where $S_i \subseteq \{s_1, s_2\}$ and $T_j
\subseteq \{t_1, t_2\}$ for different subsets $S_i$ and $T_j$. Based
on these values, we propose an achievable rate region for this
problem based on linear codes. Towards this end, we begin by
defining a base region where both sources are multicast to both the
terminals. Following this we enlarge the region by appropriately
encoding the information at the source nodes, such that terminal
$t_i$ is only guaranteed to decode information from the intended
source $s_i$, while decoding a linear function of the other source.
The rate region takes different forms depending upon the
relationship of the different cut values in the network.
%
\end{abstract}

\section{Introduction}


The problem of characterizing the utility of network coding for
multiple unicasts is an intriguing one. In the multiple unicast
problem there is a set of source-terminal pairs in a network that
wish to communicate messages. This is in contrast to the multicast
problem where each terminal requests exactly the same set of
messages from the source nodes. The multicast problem under network
coding is very well understood. In particular, several papers
\cite{al}\cite{rm}\cite{Tracy06} discuss the exact capacity region
and network code construction algorithms for this problem.

However, the multiple unicast problem is not that well understood. A significant amount of previous work has attempted to find inner
and outer bounds on the capacity region for a given instance of a
network.
In \cite{yan06isit}, an information theoretic characterization for
directed acyclic networks is provided. However, explicit evaluation
of the region is computationally intractable for even small networks
due to the large number of constraints. The authors in
\cite{HarveyIT} propose an outer bound on the capacity region. Price et
al. \cite{javidi08} provide an outer bound on the capacity
region in a two unicast session network, and provided a network
structure in which their outer bound is the exact capacity region. The
work of \cite{Traskov06} forms a linear optimization to characterize an 
achievable rate region by packing butterfly structures in the
original graph. This approach is limited since only the XOR operation is
allowed in each butterfly structure.

In this work we propose an achievable region for the two-unicast problem using linear network codes. Our setup is somewhat different from the above-mentioned works in that we consider
directed acyclic networks with unit capacity edges and assume that
we only know certain minimum cut values for the network, e.g.,
mincut$(S_i, T_j)$, where $S_i \subseteq \{s_1, s_2\}$ and $T_j
\subseteq \{t_1, t_2\}$ for different subsets $S_i$ and $T_j$. 
This is related to the work of Wang
and Shroff \cite{wangIT10} (see also \cite{shenvi}) for two-unicast that presented a
necessary and sufficient condition on the network structure for the
existence of a network coding solution that supports unit rate
transmission for each $s_i-t_i$ pair. In this work we consider general rates.
Reference \cite{feder09} is related
in the sense that they give an achievable rate region for this
problem based on the number of edge disjoint paths for $s_i-t_i$
pair. In our work we propose a new achievable rate region given
additional information about the network resources. The work of \cite{huangicc}
considered the three unicast session problem in which each source is transmitting at unit rate. Finally, reference \cite{JafarISIT} applies the technique of interference alignment in the case of three unicast sessions and shows that communication at half the mincut of each source-terminal pair is possible.



This paper is organized as follows. Section \ref{sec:system}
introduces the system model under consideration. Section
\ref{sec:rateregion} contains the precise problem formulation and
the derivations of our proposed achievable rate region. Section \ref{sec:comparison} compares our achievable region to existing literature.
Due to space limitations, some of the lemma proofs are not given and can be found in \cite{ITWsup}.

\section{System Model}
\label{sec:system} We consider a network represented by a directed acyclic graph $G=(V,E)$. There is
a source set $S=\{s_1,s_2\}\in V$ in which each source observes a random process with a discrete integer entropy, and there
is a terminal set $T=\{t_1,t_2\}\in V$ in which $t_i$ needs to
uniquely recover the information transmitted from $s_i$ at rate
$R_i$. Each edge $e\in E$ has unit capacity and can transmit one
symbol from a finite field of size $q$. If a given edge has a higher
capacity, it can be divided into multiple parallel edges with unit
capacity. 
Without loss of generality (W.l.o.g.), we assume that there is no
incoming edge into source $s_i$, and no outgoing edge from terminal
$t_i$. By Menger's theorem, the minimum cut between sets $S_{N_1}\subseteq S$ and $T_{N_2}\subseteq T$ is
the number of edge disjoint paths from $S_{N_1}$ to $T_{N_2}$,
and will be denoted by $k_{N_1-N_2}$ where $N_1,N_2\subseteq \mathcal N=\{1,2\}$. For two unicast sessions, we define
the \textit{cut vector} as the vector of the cut values
$k_{1-1}$, $k_{2-2}$, $k_{1-2}$, $k_{2-1}$, $k_{12-1}$, $k_{12-2}$,
$k_{1-12}$, $k_{2-12}$ and $k_{12-12}$. 

The network coding model in this work is based on \cite{rm}. Assume
source $s_i$ needs to transmit at rate $R_i$. Then the random
variable observed at $s_i$ is denoted as
$X_i=(X_{i1},X_{i2},\cdots,X_{iR_i})$, where each $X_{ij}$ is an element of $GF(q)$; the $X_i$s are assumed to be independent. For linear network codes, the
signal on an edge $(i,j)$ is a linear combination of the signals on
the incoming edges on $i$ or a linear combination of the source
signals at $i$. Let $Y_{e_n}$ ($tail(e_n)=k$ and $head(e_n)=l$)
denote the signal on edge $e_n\in E$. Then,
\begin{align*}
&Y_{e_n} = \sum_{\{e_m|head(e_m)=k\}}f_{m,n}Y_{e_m}\text{ if }k\in V\setminus \{s_1,s_2\}, \text{ and }\\
&Y_{e_n} = \sum_{j=1}^{R_i}a_{ij,n}X_{ij}\text{ if }X_i\text{ is
observed at } k.
\end{align*}
The \textit{local coding vectors} $a_{ij,n}$ and $f_{m,n}$ are
also chosen from $GF(q)$. We can also express $Y_{e_n}$ as
, $Y_{e_n}=\sum_{j=1}^{R_1}\alpha_{j,n}X_{1j}+\sum_{j=1}^{R_2}\beta_{j,n}X_{2j}$.
Then the global coding vector of $Y_{e_n}$ is $[\alpha_{n},
\beta_{n}] = [\alpha_{1,n},\alpha_{2,n}, \cdots,
\alpha_{R_1,n},\beta_{1,n},\beta_{2,n}, \cdots, \beta_{R_2,n}]$. We are free to choose an appropriate value of the field size $q$.

In this work, we present an achievable rate region given
a subset of the cut values in the cut vector; namely, $k_{1-1}$, $k_{2-2}$, $k_{1-2}$, $k_{2-1}$, $k_{12-1}$, $k_{12-2}$. 
W.l.o.g, we assume there are $k_{i-ij}$ outgoing edges from $s_i$ and $k_{ij-i}$ incoming edges
to $t_i$. If this is not the case one can always introduce an
artificial source (terminal) node connected to the original source
(terminal) node by $k_{i-ij}$ ($k_{ij-i}$) edges. It can be seen that the new network
has the same cut vector as the original network.

\section{Achievable rate region for a given cut vector}
\label{sec:rateregion}
First, suppose that only $t_1$ is interested in recovering the random variables $X_1$ and $X_2$ which are observed at $s_1$ and $s_2$ respectively. 
Denote the rate from $s_1$ to $t_1$ and $s_2$ to $t_1$ as $R_{11}$ and $R_{12}$. Then the capacity region $C_{t_1}$, that is achieved by routing will be
\begin{align*}
R_{11}&\leq k_{1-1},\\
R_{12}&\leq k_{2-1},\\
R_{11}+R_{12}&\leq k_{12-1}.
\end{align*}

The capacity region $C_{t_2}$ for $t_2$  can be drawn in a similar
manner. This is shown in Fig. \ref{fig:excase2}. We also find the
boundary points $a,b,c,d$ such that their coordinates are
$a=(k_{12-1}-k_{2-1},k_{2-1}),b=(k_{1-2},k_{12-2}-k_{1-2}),c=(k_{1-1},k_{12-1}-k_{1-1}),d=(k_{12-2}-k_{2-2},k_{2-2})$. A simple achievable rate region for our problem can be arrived at by
multicasting both sources $X_1$ and $X_2$ to both the terminals
$t_1$ and $t_2$.

\begin{figure}[htbp]
\subfigure[]{\label{fig:excase2}
\includegraphics[width=42mm,clip=false, viewport=70 35 190 120]{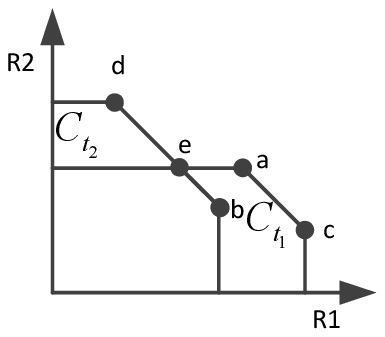}}
\subfigure[]{\label{fig:base}
\includegraphics[width=42mm,clip=false, viewport=70 35 190 120]{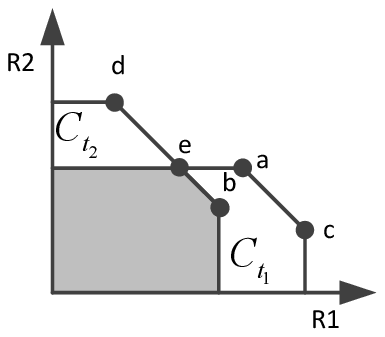}}
\caption{\label{fig:casesecond} (a) An example of a capacity region.
(b) Base region for the example. }
\end{figure}

\begin{theorem}
\label{th:main}
Rate pairs $(R_1, R_2)$ belonging to the following set $\mathcal{B}$ can be achieved for two unicast sessions.
\begin{align*}
\mathcal{B} = \{&R_1\leq \min(k_{1-2},k_{1-1}),\\
&R_2\leq \min(k_{2-1},k_{2-2}),\\
&R_1+R_2 \leq \min(k_{12-1},k_{12-2})\}.
\end{align*}
\end{theorem}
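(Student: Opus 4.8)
\emph{Plan.} The region $\mathcal{B}$ is exactly the rate region of the auxiliary \emph{multi-source multicast} problem in which \emph{both} $X_1$ and $X_2$ must be delivered to \emph{both} $t_1$ and $t_2$ (the base region depicted in Fig.~\ref{fig:base}). If we can multicast $(X_1,X_2)$ to both terminals, then in particular $t_i$ recovers $X_i$, so any pair achievable for this multicast problem is achievable for the two-unicast problem. Hence it suffices to show that every $(R_1,R_2)\in\mathcal{B}$ is achievable by a linear network code for the two-source multicast, and this will follow from the multicast capacity theorem and the achievability of multicast capacity by linear codes \cite{al}\cite{rm}\cite{Tracy06}, once the relevant min-cut conditions are verified.

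\emph{Reduction to a single source.} First I would reduce the two-source multicast to a single-source multicast by adjoining a virtual super-source $s^\ast$ to $G$, joined to $s_1$ by $R_1$ parallel unit-capacity edges and to $s_2$ by $R_2$ parallel unit-capacity edges, with $s^\ast$ placing the $R_1$ components of $X_1$ on the former edges and the $R_2$ components of $X_2$ on the latter; since $s_i$ has no incoming edge, $s^\ast$ is a genuine source of the augmented DAG $G^\ast$, which carries $R_1+R_2$ symbols at $s^\ast$. By the multicast theorem it then suffices to show
\begin{align*}
\mathrm{mincut}_{G^\ast}(s^\ast,t_j)\;\ge\;R_1+R_2,\qquad j=1,2,
\end{align*}
after which a linear code over a sufficiently large $GF(q)$ (any $q$ exceeding the number of terminals suffices) delivers all $R_1+R_2$ symbols to each $t_j$.

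\emph{The cut bound.} The key step is to lower bound $\mathrm{mincut}_{G^\ast}(s^\ast,t_j)$. Given any $s^\ast$--$t_j$ edge cut, let $A\subseteq\{1,2\}$ be the set of original sources lying on the $s^\ast$-side of the cut. Then the cut severs the $\sum_{i\notin A}R_i$ virtual edges feeding the sources not in $A$, plus a set of edges of $G$ whose removal separates $\{s_i:i\in A\}$ from $t_j$; the latter has size at least $\mathrm{mincut}\bigl(\{s_i:i\in A\}\to t_j\bigr)$, using again that the omitted sources have no in-edges. Therefore
\begin{align*}
\mathrm{mincut}_{G^\ast}(s^\ast,t_j)\;\ge\;\min_{A\subseteq\{1,2\}}\Bigl[\,\textstyle\sum_{i\notin A}R_i\,+\,\mathrm{mincut}\bigl(\{s_i:i\in A\}\to t_j\bigr)\Bigr].
\end{align*}
Checking the four choices of $A$ and using the defining inequalities of $\mathcal{B}$: for $A=\emptyset$ the bracket is $R_1+R_2$; for $A=\{1\}$ it is $R_2+k_{1-j}\ge R_1+R_2$ since $R_1\le k_{1-j}$; for $A=\{2\}$ it is $R_1+k_{2-j}\ge R_1+R_2$ since $R_2\le k_{2-j}$; and for $A=\{1,2\}$ it is $k_{12-j}\ge R_1+R_2$. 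Thus $\mathrm{mincut}_{G^\ast}(s^\ast,t_j)\ge R_1+R_2$ for $j=1,2$, which completes the argument.

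\emph{Main obstacle.} I expect the only nonroutine point to be making the cut decomposition above rigorous, i.e.\ justifying that placing the ``unused'' virtual source edges and the omitted source nodes on the sink side costs nothing beyond the term $\sum_{i\notin A}R_i$, and that the residual edge cut in $G$ may be taken to be a genuine $\{s_i:i\in A\}$-to-$t_j$ separator; the hypothesis that each $s_i$ has no incoming edge is what makes this clean. After that, everything is the quoted multicast theorem plus the four-case check. A minor formal point worth stating explicitly is that a single linear code over one common field must serve $t_1$ and $t_2$ simultaneously, which is precisely what the linear multicast construction of \cite{al}\cite{rm}\cite{Tracy06} guarantees.
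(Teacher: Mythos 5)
Your proposal is correct and follows essentially the same route as the paper, which simply multicasts both sources to both terminals and invokes the multi-source multi-sink multicast result (Thm.~8 of \cite{rm}); the conditions defining $\mathcal{B}$ are exactly the cut conditions of that theorem. The only difference is that you additionally supply the standard super-source reduction and the four-case cut verification that the paper leaves to the citation, and that verification is carried out correctly.
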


\begin{proof}
We multicast both the sources to each terminal. This can be done
using the multi-source multi-sink multicast result (Thm. 8) in \cite{rm}.
\end{proof}

Subsequently we will refer to region $\mathcal{B}$ achieved by
multicast as the \textit{base rate region} (the grey region in Fig.
\ref{fig:base}).


We now move on to precisely formulating the problem. Let $Z_i$
denote the received vector at $t_i$,  $X_i$
denote the transmitted vector at $s_i$, and $H_{ij}$ denote
the transfer function from $s_j$ to $t_i$.
Let $M_i$ denote the encoding matrix at $s_i$, i.e., $M_i$ is the transformation from $X_i$ to the transmitted symbols on the outgoing edges from $s_i$. In our formulation, we will let the length of $X_i$ to be $k_{i-i}$ (i.e., the maximum possible). For transmission at rates $R_1$ and $R_2$, we introduce precoding matrices $V_i, i = 1,2$ of dimension $R_i \times k_{i-i}$, so that the overall system of equations is as follows.
\begin{equation}
\label{eq:system}
\begin{split}
Z_1&=H_{11}M_1V_1X_1+H_{12}M_2V_2X_2,\\
Z_2&=H_{21}M_1V_1X_1+H_{22}M_2V_2X_2.
\end{split}
\end{equation}
We say that $t_i$ can receive at rate $R_i$ from $s_i$ if it can decode
$V_i X_i$ perfectly. The row dimension of the $V_i's$ can be adjusted to
obtain different rate vectors. For $(R_1, R_2) \in \mathcal{B}$, it
can be shown that there exist local coding vectors over a large enough
field such that the ranks of the different matrices in the first
column of Table \ref{table:dime} are given by the corresponding
entries in the third column, which correspond to the maximum
possible. Furthermore, by the multi-source multi-sink multicast
result, these matrices are such that
$[H_{11}M_1~~H_{12}M_2]$ is a full rank matrix of dimension
$k_{12-1}\times (R_1+R_2)$, and $[H_{21}M_1~~H_{22}M_2]$ is a full rank
matrix of dimension $k_{12-2}\times (R_1+R_2)$. In Table
\ref{table:dime}, for instance since the minimum cut between $s_1$
and $t_1$ is $k_{1-1}$, we know that the maximum rank of $H_{11}$ is
$k_{1-1}$. Using the formalism of \cite{rm}, we can conclude that
there is a square submatrix of $H_{11}$ of dimension $k_{1-1} \times
k_{1-1}$ whose determinant is not identically zero. Such appropriate
submatrices can be found for each of the matrices in the first
column of Table \ref{table:dime}. This in turn implies that their
product is not identically zero and therefore using the
Schwartz-Zippel lemma, we can conclude that there exists an
assignment of local coding vectors so that the rank of all the matrices
is simultaneously the maximum possible. For the rest of the paper,
we assume that such a choice of local coding vectors has been made. Our arguments will revolve around appropriately modifying the
source encoding matrices $M_1$ and $M_2$.

\begin{table}\scriptsize \caption{dimension of matrices} \centering
{\begin{tabular}{c|c c}
\hline\\
{\bf matrix} & $dimension $ & $rank$   \\
\hline\\
$H_{11}$ & $k_{12-1}\times k_{1-12}$ & $k_{1-1}$ \\
\hline\\
$H_{12}$ &$k_{12-1}\times k_{2-12}$ &$k_{2-1}$ \\
\hline\\
$[H_{11}~~H_{12}]$ &$k_{12-1}\times (k_{1-12}+k_{2-12})$ &$k_{12-1}$ \\
\hline\\
$M_{1}$ &$k_{1-12}\times R_1$ &$R_1$ \\
\hline\\
$H_{21}$ & $k_{12-2}\times k_{1-12}$ & $k_{1-2}$\\
\hline\\
$H_{22}$ & $k_{12-2}\times k_{2-12}$ & $k_{2-2}$ \\
\hline\\
$[H_{21}~~H_{22}]$ &$k_{12-2}\times (k_{1-12}+k_{2-12})$ &$k_{12-2}$ \\
\hline\\
$M_{2}$&  $k_{2-12}\times R_2$ & $R_2$ \\
\hline
\end{tabular}}
\label{table:dime} \vspace{-0.1in}
\end{table} \normalsize

Note that there are two boundary points of the base region (the two
boundary points may overlap). At point $Q_1$, we denote the achievable rate pair by $(R^*_1, R^*_2)$ where
\begin{align*}
R^*_1&=\min(k_{1-2},k_{1-1}), \text{~and} \\
R^*_2&=\min(\min(k_{2-1},k_{2-2}),\min(k_{12-1},k_{12-2})-R^*_1).
\end{align*}
At point $Q_2$, we denote the achievable rate pair by $(R^{**}_1, R^{**}_2)$ where
\begin{align*}
R^{**}_2&=\min(k_{2-1},k_{2-2}) , \text{~and} \\
R^{**}_1&=\min(\min(k_{1-2},k_{1-1}),\min(k_{12-1},k_{12-2})-R^{**}_2)
\end{align*}
In Fig. \ref{fig:excase2}, these boundary points are $Q_1 = b$ and $Q_2 = e$.

In what follows, we will present our arguments towards increasing the value of $R_1$ to be larger than $R^*_1$ (these arguments can be symmetrically applied for increasing $R_2$ as well). For this purpose, we will start with the point $Q_1$ and attempt to achieve points that are near it but do not belong to $\mathcal{B}$.
At $Q_1$, if
$R^*_1=k_{1-1}$, then we cannot increase $R_1$ due to the cut
constraints. Hence, we assume $R^*_1=k_{1-2}$.
Furthermore, since $k_{2-2}\geq k_{12-2}-k_{1-2} \geq
\min(k_{12-1},k_{12-2})-k_{1-2}$, $R^*_2=\min(\min(k_{2-1},k_{2-2}),\min(k_{12-1},k_{12-2})-R^*_1)=\min(k_{2-1}, \min(k_{12-1},k_{12-2})-k_{1-2})$.

In this paper we refer to $k_{1-2} + k_{2-1}$ as a measure of the interference in the network and in the subsequent discussion present achievable regions based on its value. We emphasize though that this is nomenclature used for ease of presentation. Indeed a high value of $k_{1-2}$ does not necessarily imply that there is a lot of interference at $t_2$, since the network code itself dictates the amount of interference seen by $t_2$.
The following lemma will be used extensively.
\begin{lemma}
\label{lemma:partialDecode} Consider a system of equations
$Z=H_1X_1+H_2X_2$, where $X_1$ is a vector of length $l_1$ and $X_2$
is a vector of length $l_2$ and $Z\in span([H_1~~H_2])$\footnote{Throughout the paper, $span(A)$ refers to
the column span of $A$.}. The matrix
$H_1$ has dimension $z_t\times l_1$, and rank $l_1-\sigma$, where
$0\leq\sigma\leq l_1$. The matrix $H_2$ is full rank and has
dimension $z_t\times l_2$ where $z_t\geq (l_1+l_2-\sigma)$.
Furthermore, the column spans of $H_1$ and $H_2$ intersect only in
the all-zeros vectors, i.e. $span(H_1)\cap
span(H_2)=\{0\}$. Then there exists a unique solution for
$X_2$.
\end{lemma}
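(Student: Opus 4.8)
The plan is to separate the claim into existence and uniqueness; only the uniqueness of $X_2$ carries any content. Existence of at least one pair $(X_1,X_2)$ with $Z=H_1X_1+H_2X_2$ is literally the hypothesis $Z\in\mathrm{span}([H_1~~H_2])$, so what I actually have to show is that the $X_2$-component of every such pair coincides.

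For uniqueness I would take two solutions $(X_1,X_2)$ and $(\tilde X_1,\tilde X_2)$ of the system and subtract, obtaining $H_1(X_1-\tilde X_1)=H_2(\tilde X_2-X_2)$. The left-hand side lies in $\mathrm{span}(H_1)$ and the right-hand side in $\mathrm{span}(H_2)$, so by the hypothesis $\mathrm{span}(H_1)\cap\mathrm{span}(H_2)=\{0\}$ both sides must be the zero vector; in particular $H_2(\tilde X_2-X_2)=0$. Since $z_t\ge l_1+l_2-\sigma\ge l_2$ (using $\sigma\le l_1$), the matrix $H_2$, being full rank of size $z_t\times l_2$, has full column rank, whence $\tilde X_2-X_2=0$. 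This yields the unique $X_2$.

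I do not expect a genuine obstacle here; the only points meriting care are (i) extracting ``full column rank'' of $H_2$ from the stated ``full rank'' together with the size inequality, and (ii) noting that the rank condition on $H_1$ together with $z_t\ge l_1+l_2-\sigma$ is exactly what makes the trivial-intersection hypothesis consistent, since then $\mathrm{rank}[H_1~~H_2]=(l_1-\sigma)+l_2\le z_t$ --- although those two facts are not otherwise invoked in the argument above. It is also worth a one-line remark that $X_1$ itself need not be unique when $\sigma>0$, because $H_1U=0$ can admit nonzero solutions $U$; this is precisely why the lemma only asserts uniqueness of $X_2$, and it is consistent with the solution set of the full system being an affine space of dimension $\sigma$ whose projection onto the $X_2$-coordinates is a single point.
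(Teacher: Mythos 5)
Your proof is correct: the subtraction argument (two solutions give $H_1(X_1-\tilde X_1)=H_2(\tilde X_2 - X_2)$, which must vanish by the trivial-intersection hypothesis, and then full column rank of $H_2$ forces $\tilde X_2 = X_2$) is the standard and essentially only route to this statement, and the paper itself omits the proof, deferring it to its supplementary reference. Your side remarks --- that $z_t \ge l_1+l_2-\sigma$ and the rank of $H_1$ serve mainly to make the hypotheses consistent, and that $X_1$ need not be unique when $\sigma>0$ --- are accurate and correctly identify which assumptions carry the logical weight.
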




\subsection{Low Interference Case}
This is the case when $k_{1-2}+k_{2-1}\leq \min(k_{12-1},k_{12-2})$. At $Q_1$, from the assumption, it follows that
$R^*_1=k_{1-2}, R^*_2=\min(k_{2-1},\min(k_{12-1},k_{12-2})-k_{1-2})=k_{2-1}$. An example is shown
in Fig. \ref{fig:rateregion}. 
Furthermore, $Q_1=Q_2 = e$.

Our solution strategy is to consider the encoding matrices $M_1$ and
$M_2$ at the point $Q_1$,
and to introduce a new encoding matrix at $s_1$, denoted $M_1'$
(with $R^*_1 + \delta$ columns) such that $span(H_{11}M'_1)\cap
span(H_{12})=\{0\}$. As shown below, this will allow $t_1$ to decode from $s_1$ at rate $R^*_1 + \delta$ and $t_2$ to decode from $s_2$ at rate $R^*_2$. After the modification, each $t_i$ is guaranteed
to decode at the appropriate rate from $s_i$. A similar argument can
then be applied for $R^*_2$ to arrive at the achievable rate region in
this case.


\begin{figure}[htbp]
\subfigure[]{\label{fig:rateregion}
\includegraphics[width=42mm,clip=false, viewport=50 57 165 140]{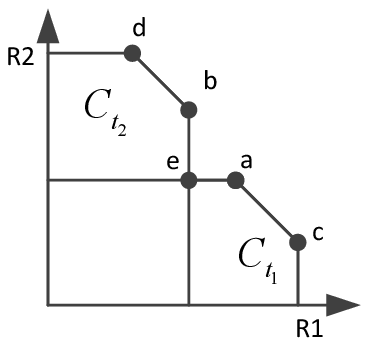}}
\subfigure[]{\label{fig:rateregion_im_case1}
\includegraphics[width=42mm,clip=false, viewport=50 57 165 140]{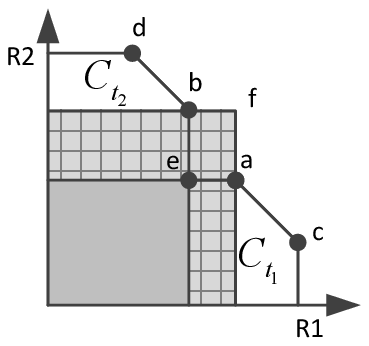}}
\caption{\label{fig:casefirst} (a) The capacity regions $C_{t_1}$ and $C_{t_2}$ for an
example of low interference case. (b) The achievable rate region for low interference case. For each point in the shaded grey area, both terminals can recover both the sources. In the hatched grey area, for a given rate point, its $x$-coordinate is the rate for $s_1-t_1$ and its $y$-coordinate is the rate for $s_2-t_2$; the terminals are not guaranteed to decode both sources in this region.}
\end{figure}

At the point $Q_1$, the rates are $R^*_1 = k_{1-2}, R^*_2 = k_{2-1}$.
Since both terminals can decode both sources, it holds that
\begin{align*} &rank(H_{i1}M_1)= k_{1-2}, rank(H_{i2}M_2)=k_{2-1}, \text{~and~}\\ &span(H_{i1}M_1)\cap span(H_{i2}M_2)=\{0\} \text{~for $i = 1,2.$} \end{align*}
By analyzing the properties of the above matrices, we have Theorem
\ref{th:caseA}. Before we state the theorem, we first give the
following lemma which will be used in
proving Theorem \ref{th:caseA}.

\begin{lemma}
\label{lemma:in} \textit{Rate Increase Lemma.} In the base region, denote the achievable rates at $Q_1$ as
$R^*_1$ and $R^*_2$, and the corresponding encoding matrices as $M_1$
and $M_2$.
Let $rank([H_{11}~~H_{12}M_2])=r\geq R^*_1+R^*_2$. 
There exist a series of full rank matrices
$\bar{M}_1^{(n)}=[\tilde{M}^{(n)}_1~~M_1]$ of dimension $k_{1-12}\times
(n+R^*_1)$ such that
$rank([H_{11}\bar{M}_1^{(n)}~~H_{12}M_2])=R^*_1+R^*_2+n$, $0\leq n \leq
(r-R^*_1-R^*_2)$.
\end{lemma}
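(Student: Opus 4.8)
The plan is to argue entirely in terms of column spaces over $GF(q)$. Write $W=span([H_{11}M_1~~H_{12}M_2])$ and $U=span([H_{11}~~H_{12}M_2])$. By the base-region construction $[H_{11}M_1~~H_{12}M_2]$ has full column rank, so $\dim W=R^*_1+R^*_2$; by hypothesis $\dim U=r$; and obviously $W\subseteq U$. I want to produce column vectors $c_1,\dots,c_n\in GF(q)^{k_{1-12}}$ so that, setting $\tilde M_1^{(n)}=[c_1~\cdots~c_n]$ and $\bar M_1^{(n)}=[\tilde M_1^{(n)}~~M_1]$, we get (i) $rank([H_{11}\bar M_1^{(n)}~~H_{12}M_2])=R^*_1+R^*_2+n$ and (ii) $\bar M_1^{(n)}$ has full column rank $n+R^*_1$.

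The single structural fact I need is $span(H_{11})+W=U$: since $span(H_{12}M_2)\subseteq W$ we have $span(H_{11})+W\supseteq span(H_{11})+span(H_{12}M_2)=U$, and the reverse inclusion is clear. Hence, letting $\pi:U\to U/W$ be the quotient map (so $\dim(U/W)=r-R^*_1-R^*_2$), the restriction of $\pi$ to $span(H_{11})$ is surjective. So I can choose $v_1,\dots,v_{r-R^*_1-R^*_2}\in span(H_{11})$ whose images $\pi(v_1),\dots$ form a basis of $U/W$, and for each $i$ pick some $c_i$ with $H_{11}c_i=v_i$. For a given $n$ with $0\le n\le r-R^*_1-R^*_2$ I keep only $c_1,\dots,c_n$; taking the initial segments also makes the family $\{\bar M_1^{(n)}\}_n$ nested, as suggested by ``a series of matrices'' in the statement.

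It then remains to verify (i) and (ii). For (i), $span([H_{11}\bar M_1^{(n)}~~H_{12}M_2])=W+span(v_1,\dots,v_n)$, and since $\pi(v_1),\dots,\pi(v_n)$ are linearly independent in $U/W$ this sum has dimension exactly $R^*_1+R^*_2+n$. For (ii), suppose $\sum_i\alpha_i c_i+M_1\beta=0$; applying $H_{11}$ gives $\sum_i\alpha_i v_i=-H_{11}M_1\beta\in span(H_{11}M_1)\subseteq W$, hence $\sum_i\alpha_i\pi(v_i)=0$, forcing every $\alpha_i=0$; then $M_1\beta=0$, and since $M_1$ has full column rank, $\beta=0$. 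Thus the columns of $\bar M_1^{(n)}$ are independent, i.e. it is full rank.

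The only delicate point is that (i) and (ii) must hold simultaneously: if some new column $H_{11}c_i$ happened to land in $W$, then (i) would fail, while if the $c_i$ were linearly dependent with the columns of $M_1$, then (ii) would fail. Both obstructions are removed at once by the requirement that $v_1,\dots,v_n$ be linearly independent modulo $W$ — precisely because $W$ already contains $span(H_{11}M_1)$, this one condition decouples the two requirements, which is the crux of the argument. (Feasibility, $n+R^*_1\le k_{1-12}$, is automatic: subadditivity of rank gives $r\le rank(H_{11})+rank(H_{12}M_2)=k_{1-1}+R^*_2\le k_{1-12}+R^*_2$, hence $n\le r-R^*_1-R^*_2\le k_{1-12}-R^*_1$.)
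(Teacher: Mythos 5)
Your proof is correct and is essentially the standard argument one would give here (the paper defers this proof to its supplementary material): the key observation $span(H_{11})+span([H_{11}M_1~H_{12}M_2])=span([H_{11}~H_{12}M_2])$ lets you pick preimages $c_i$ of vectors completing a basis modulo $W$, and your quotient-space phrasing is just a clean packaging of the greedy column-by-column augmentation. The verification that independence of the $v_i$ modulo $W$ simultaneously yields the rank condition and the full column rank of $\bar{M}_1^{(n)}$, together with the feasibility check $n+R^*_1\leq k_{1-12}$, covers all the points that need checking.
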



\begin{theorem}
\label{th:caseA}
Given a cut vector, if $k_{1-2}+k_{2-1}\leq
\min(k_{12-1},k_{12-2})$, then the rate pair in the following region
can be achieved.

\noindent \underline{Region 1:}
\begin{align*}
R_1&\leq k_{12-1}-k_{2-1},\\
R_2&\leq k_{12-2}-k_{1-2},
\end{align*}
which is shown in Fig. \ref{fig:rateregion_im_case1}.
\end{theorem}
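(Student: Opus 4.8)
The plan is to reduce the claim to achieving the single corner point $(k_{12-1}-k_{2-1},\,k_{12-2}-k_{1-2})$ of Region~1: this point dominates every pair in Region~1 coordinatewise and the achievable set is closed under decreasing coordinates (drop rows of the $V_i$'s), so achieving it gives the whole rectangle. In the low-interference regime $Q_1=Q_2=e$, so I start from $Q_1$, where the rates are $(R^*_1,R^*_2)=(k_{1-2},k_{2-1})$ and the multicast matrices $M_1,M_2$ satisfy, for $i=1,2$, $rank(H_{i1}M_1)=k_{1-2}$, $rank(H_{i2}M_2)=k_{2-1}$ and $span(H_{i1}M_1)\cap span(H_{i2}M_2)=\{0\}$. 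Two identities are used throughout: since $rank(H_{21}M_1)=k_{1-2}=rank(H_{21})$ and $rank(H_{12}M_2)=k_{2-1}=rank(H_{12})$, we have $span(H_{21}M_1)=span(H_{21})$ and $span(H_{12}M_2)=span(H_{12})$; in particular the multicast property at $t_2$ gives $span(H_{21})\cap span(H_{22}M_2)=\{0\}$.

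\emph{Raising $R_1$.} I apply the Rate Increase Lemma (Lemma~\ref{lemma:in}) at $Q_1$. Its hypothesis needs $r:=rank([H_{11}~~H_{12}M_2])$, which by $span(H_{12}M_2)=span(H_{12})$ equals $rank([H_{11}~~H_{12}])=k_{12-1}$. Hence, with $n=k_{12-1}-k_{1-2}-k_{2-1}=r-R^*_1-R^*_2$ (nonnegative by the low-interference hypothesis), the lemma produces $\bar M_1=[\tilde M_1~~M_1]$ with $k_{12-1}-k_{2-1}$ columns and $rank([H_{11}\bar M_1~~H_{12}M_2])=k_{12-1}$. Comparing this with $rank(H_{12}M_2)=k_{2-1}$ and the column count of $\bar M_1$ forces $rank(H_{11}\bar M_1)=k_{12-1}-k_{2-1}$ and $span(H_{11}\bar M_1)\cap span(H_{12}M_2)=\{0\}$, so $[H_{11}\bar M_1~~H_{12}M_2]$ has full column rank $k_{12-1}$ and $t_1$ recovers $X_1$ at rate $k_{12-1}-k_{2-1}$. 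For $t_2$ I invoke Lemma~\ref{lemma:partialDecode} on $Z_2=H_{21}\bar M_1X_1+H_{22}M_2X_2$ with full-rank block $H_{22}M_2$: the transversality $span(H_{21}\bar M_1)\cap span(H_{22}M_2)=\{0\}$ is automatic because $span(H_{21}\bar M_1)\subseteq span(H_{21})$ and $span(H_{21})\cap span(H_{22}M_2)=\{0\}$, and the size condition $k_{12-2}\ge rank(H_{21}\bar M_1)+k_{2-1}$ holds since $rank(H_{21}\bar M_1)\le k_{1-2}$ and $k_{1-2}+k_{2-1}\le k_{12-2}$. So $t_2$ still decodes $X_2$ at rate $k_{2-1}$.

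\emph{Raising $R_2$.} Keeping $\bar M_1$ fixed, I modify $s_2$ by the symmetric construction. Because $\bar M_1$ retains the columns of $M_1$, we have $span(H_{21}\bar M_1)=span(H_{21})$ and $rank(H_{21}\bar M_1)=k_{1-2}$. Since $\dim(span(H_{22})\cap span(H_{21}))=k_{2-2}+k_{1-2}-k_{12-2}$, the largest subspace of $span(H_{22})$ transverse to $span(H_{21})$ has dimension $k_{12-2}-k_{1-2}$; as $span(H_{22}M_2)$ is already transverse to $span(H_{21})$, it extends to such a maximal subspace $W'$. Pick $\bar M_2=[\tilde M_2~~M_2]$ with $H_{22}\tilde M_2$ spanning a complement of $span(H_{22}M_2)$ in $W'$, so $rank(H_{22}\bar M_2)=k_{12-2}-k_{1-2}$, $span(H_{22}\bar M_2)=W'$ is transverse to $span(H_{21})\supseteq span(H_{21}\bar M_1)$, and $rank(H_{12}\bar M_2)=k_{2-1}$. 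Lemma~\ref{lemma:partialDecode} applied to $Z_2=H_{21}\bar M_1X_1+H_{22}\bar M_2X_2$ (full-rank block $H_{22}\bar M_2$; size condition $k_{12-2}\ge k_{1-2}+(k_{12-2}-k_{1-2})$) lets $t_2$ decode $X_2$ at rate $k_{12-2}-k_{1-2}$, and Lemma~\ref{lemma:partialDecode} applied to $Z_1=H_{11}\bar M_1X_1+H_{12}\bar M_2X_2$ (full-rank block $H_{11}\bar M_1$; transversality from $span(H_{12}\bar M_2)\subseteq span(H_{12})$ and the Step-1 fact $span(H_{11}\bar M_1)\cap span(H_{12})=\{0\}$, using $span(H_{12}M_2)=span(H_{12})$; size condition $k_{12-1}\ge k_{2-1}+(k_{12-1}-k_{2-1})$) lets $t_1$ still decode $X_1$. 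This reaches the corner point and proves the theorem.

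The part needing the most care is the bookkeeping of which column spans remain ``full'' after each modification, because that is exactly what makes the hypotheses of Lemma~\ref{lemma:partialDecode} go through: one must track that $\bar M_1$ (resp.\ $\bar M_2$) keeps the original columns of $M_1$ (resp.\ $M_2$), so that $H_{21}\bar M_1$ still spans $span(H_{21})$ and $H_{12}\bar M_2$ still has rank $k_{2-1}$, and that the property $span(H_{11}\bar M_1)\cap span(H_{12})=\{0\}$ established while raising $R_1$ survives to be reused when raising $R_2$. Every size inequality used above collapses to the standing hypothesis $k_{1-2}+k_{2-1}\le\min(k_{12-1},k_{12-2})$, which both certifies the construction and shows precisely where it would fail outside the low-interference regime.
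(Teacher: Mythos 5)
Your proof is correct and follows essentially the same route as the paper: start at $Q_1$, use the Rate Increase Lemma plus the squeeze argument to get an augmented $M_1'$ with $span(H_{11}M_1')\cap span(H_{12})=\{0\}$, do the symmetric thing at $s_2$, and finish with Lemma~\ref{lemma:partialDecode}. The only difference is that you spell out the construction of the augmented $M_2$ via an explicit subspace-extension argument (and verify the size hypotheses of Lemma~\ref{lemma:partialDecode}) where the paper simply says ``likewise,'' which is a useful but not substantively different elaboration.
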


\begin{proof}
In this case, $R^*_1=k_{1-2}$ and $R^*_2=k_{2-1}$ is the boundary point
$Q_1=Q_2$. We will try to find full rank matrix $M'_1$ of dimension $k_{1-12}\times (k_{12-1}-k_{2-1})$ and full rank matrix
$M'_2$ of dimension $k_{2-12}\times (k_{12-2}-k_{1-2})$ such that
the system of equations can be written as
\begin{equation*}
\begin{split}
Z_1&=H_{11}M'_1V'_1X_1+H_{12}M'_2V'_2X_2,\\
Z_2&=H_{21}M'_1V'_1X_1+H_{22}M'_2V'_2X_2,
\end{split}
\end{equation*}
and $V'_1X_1$ can be decoded at $t_1$, $V'_2X_2$ can be decoded at
$t_2$.

First, note that $rank(H_{12} M_2) = rank(H_{12})$, which implies
that $span(H_{12})=span(H_{12}M_2)$. Therefore
$rank([H_{11}~~H_{12}]) = rank([H_{11}~~H_{12}~~H_{12}M_2 ] =
rank([H_{11}~~ H_{12}M_2])$. Together, this implies that
$rank([H_{11}~~H_{12}M_2]) = k_{12-1}$. Using the Rate Increase Lemma, we can find the matrix
$M'_1$ such that the following conditions are satisfied: (i) $M'_1$
is a full rank matrix of dimension $k_{1-12}\times
(k_{12-1}-k_{2-1})$, (ii) $rank(H_{11}M'_1)=k_{12-1}-k_{2-1}$ and
(iii) $span(H_{11}M'_1)\cap span(H_{12})=\{0\}$. (i) is from the
Rate Increase Lemma. (ii) and (iii) hold because of the following
argument. From Rate Increase Lemma and the fact that
$rank(H_{12}M_2)=rank(H_{12})=k_{2-1}$, we will have
\begin{align*}
k_{12-1}&=rank(H_{11}M'_1~~H_{12}M_2)\\
&= rank(H_{11}M'_1)+rank(H_{12}M_2)\\
&~~~~-rank(span(H_{11}M'_1)\cap span(H_{12}M_2))\\
&\leq rank(H_{11}M'_1)+rank(H_{12}M_2)\\
&\leq rank(M'_1)+rank(H_{12})\\
&=k_{12-1}-k_{2-1}+k_{2-1}=k_{12-1}.
\end{align*}
Then all the inequalities become equalities. (ii) and (iii)
are satisfied. Likewise, $M_2'$ can be found with similar conditions.

Next, since $span(H_{11}M'_1)\cap span(H_{12})=\{0\}$ and
$span(H_{12}M'_2)\subseteq span(H_{12})$, we will have
$span(H_{11}M'_1)\cap span(H_{12}M'_2)=\{0\}$. By Lemma
\ref{lemma:partialDecode} and the above three conditions, $t_1$ can
decode $V'_1X_1$ at rate $k_{12-1}-k_{2-1}$, but cannot decode
$V'_2X_2$. By a similar argument, $t_2$ can decode $V'_2X_2$ at rate
$k_{12-2}-k_{1-2}$, but cannot decode $V'_1X_1$.
\end{proof}


%
%



\vspace{-0.05in}
\subsection{High Interference Case}
This is the case when $k_{1-2}+k_{2-1}\geq \min(k_{12-1},k_{12-2})$. Recall that we also assume that $k_{1-2}\leq k_{1-1}$.
At $Q_1$, $R^*_1=k_{1-2}$, $R^*_2=\min(k_{2-1},
\min(k_{12-1},k_{12-2})-k_{1-2})=\min(k_{12-1},k_{12-2})-k_{1-2}$.
This means that
$Q_1$ and $Q_2$ are two separated points. An example is shown in
Fig. \ref{fig:excase2}.
In particular, when $C_{t_1}$ is contained in $C_{t_2}$ or vice versa, the achievable region is described by this case.

Our strategy is similar to the one for the previous case, but with
important differences. We begin with the rate vector at point $Q_1$
and then attempt to increase $R_1$. However, in this particular case
we will not be able to increase $R_2$ and in fact may need to reduce
it. This is because at point $Q_1$, we have $R^*_2 = rank(H_{12}M_2) < k_{2-1}=rank(H_{12})$, i.e.,
the encoding matrix $M_2$ is such that $rank(H_{12}M_2)$ is strictly
less than the maximum possible. Therefore, if we augment $M_2$ with
additional columns to arrive at $M_2'$, it is not possible to assert
as before that the $span(H_{11}M_1')\cap span(H_{12}M'_2)=\{0\}$. Hence,
it may be possible that $s_1$ cannot be decoded at $t_1$, (after augmenting $M_2$ to $M'_2$). In this situation, we have the following result.

\begin{theorem}
\label{th:caseB} Given a cut vector, if $k_{1-2}+k_{2-1}\geq
\min(k_{12-1},k_{12-2})$ and $k_{1-2}\leq k_{1-1}$, then the rate
pair in the following region can be achieved.

\noindent \underline{Region 2:}
\begin{align*}
R_1&\leq k_{1-1},\\
R_2&\leq \min(k_{12-1},k_{12-2})-k_{1-2},\\
R_1+R_2&\leq rank([H_{11}~~H_{12}M_2]).
\end{align*}
\end{theorem}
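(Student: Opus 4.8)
The plan is to follow the same template as the proof of Theorem~\ref{th:caseA}, starting from the encoding matrices $M_1, M_2$ realizing the point $Q_1$ (where $R_1^* = k_{1-2}$ and $R_2^* = \min(k_{12-1},k_{12-2}) - k_{1-2}$), and to produce a new encoding matrix $M_1'$ at $s_1$ with more columns, while keeping $M_2$ (equivalently $M_2'$ with the same column span restricted appropriately) so that $t_2$ still decodes $V_2' X_2$ at rate $R_2^*$ and $t_1$ decodes $V_1' X_1$ at the larger rate. Concretely, let $r = \mathrm{rank}([H_{11}~~H_{12}M_2])$. The target is to show that for every integer $R_1$ with $k_{1-2} \le R_1 \le \min(k_{1-1}, r - R_2^*)$ we can find a full rank $M_1'$ of dimension $k_{1-12}\times R_1$ such that $\mathrm{rank}(H_{11}M_1') = R_1$, $\mathrm{rank}([H_{11}M_1'~~H_{12}M_2]) = R_1 + R_2^*$, and $\mathrm{span}(H_{11}M_1')\cap \mathrm{span}(H_{12}M_2) = \{0\}$; the last two conditions together with Lemma~\ref{lemma:partialDecode} will give decodability of $V_1'X_1$ at $t_1$, and since $R_2^* = \mathrm{rank}(H_{12}M_2)$ is unchanged, $t_2$ still decodes $V_2'X_2 = V_2 X_2$ from $Z_2$ exactly as at $Q_1$.

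The key step is the application of the Rate Increase Lemma (Lemma~\ref{lemma:in}) with the roles as stated there: we have encoding matrices $M_1, M_2$ at $Q_1$ and $\mathrm{rank}([H_{11}~~H_{12}M_2]) = r \ge R_1^* + R_2^*$, so the lemma produces full rank matrices $\bar M_1^{(n)} = [\tilde M_1^{(n)}~~M_1]$ of dimension $k_{1-12}\times (n + R_1^*)$ with $\mathrm{rank}([H_{11}\bar M_1^{(n)}~~H_{12}M_2]) = R_1^* + R_2^* + n$ for $0 \le n \le r - R_1^* - R_2^*$. Setting $M_1' = \bar M_1^{(n)}$ and $R_1 = R_1^* + n$, I would then argue, by a rank inequality chain analogous to the one in the proof of Theorem~\ref{th:caseA}, that
\[
R_1 + R_2^* = \mathrm{rank}([H_{11}M_1'~~H_{12}M_2]) \le \mathrm{rank}(H_{11}M_1') + \mathrm{rank}(H_{12}M_2) - \mathrm{rank}(\mathrm{span}(H_{11}M_1')\cap\mathrm{span}(H_{12}M_2))
\]
together with $\mathrm{rank}(H_{11}M_1') \le \min(\mathrm{rank}(M_1'), k_{1-1}) = \min(R_1, k_{1-1})$ and $\mathrm{rank}(H_{12}M_2) = R_2^*$, forces all inequalities to be equalities when $R_1 \le k_{1-1}$, yielding simultaneously $\mathrm{rank}(H_{11}M_1') = R_1$ and the zero-intersection condition. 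The constraint $R_1 \le k_{1-1}$ is exactly what is needed for this forcing argument, and the constraint $R_1 + R_2 \le r$ is exactly the range of validity of the Rate Increase Lemma; the bound $R_2 \le R_2^* = \min(k_{12-1},k_{12-2}) - k_{1-2}$ holds because $R_2$ is never increased beyond its value at $Q_1$. This reproduces Region~2.

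The main obstacle, and the point requiring care, is that unlike in the low-interference case we cannot invoke the identity $\mathrm{span}(H_{12}) = \mathrm{span}(H_{12}M_2)$, because here $\mathrm{rank}(H_{12}M_2) = R_2^* < k_{2-1} = \mathrm{rank}(H_{12})$; this is precisely the difficulty flagged in the paragraph preceding the theorem. Consequently the zero-intersection property must be obtained from the Rate Increase Lemma's rank guarantee directly, via the equality-forcing argument above, rather than by the subspace-containment shortcut used in Theorem~\ref{th:caseA}. I would also need to check the boundary/degenerate cases — in particular that $Q_1$ is feasible in the base region so that valid $M_1, M_2$ exist, that $r \ge R_1^* + R_2^* = \min(k_{12-1},k_{12-2})$ (which follows since $[H_{11}~~H_{12}M_2]$ has the same column span as $[H_{11}~~H_{12}]$ augmented, and $\mathrm{rank}([H_{11}~~H_{12}]) = k_{12-1}$, giving $r \ge$ the relevant minimum), and that when $r - R_2^* < k_{1-1}$ the achievable corner is governed by the $R_1 + R_2 \le r$ facet while when $r - R_2^* \ge k_{1-1}$ it is governed by $R_1 \le k_{1-1}$ — so that the three listed inequalities indeed describe the full achievable rectangle/pentagon.
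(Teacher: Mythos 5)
Your first phase (applying the Rate Increase Lemma with $M_2$ held fixed, then forcing $\mathrm{rank}(H_{11}M_1')=R_1$ and $\mathrm{span}(H_{11}M_1')\cap\mathrm{span}(H_{12}M_2)=\{0\}$ by the equality-chain argument) matches the first step of the paper's proof and is sound. But it only reaches rate pairs with $R_1\leq \min(k_{1-1},\,r-R_2^*)$ and $R_2\leq R_2^*$, where $r=\mathrm{rank}([H_{11}~H_{12}M_2])$. Region~2 is a pentagon: whenever $r-R_2^*<k_{1-1}$ it contains points such as $(k_{1-1},\,r-k_{1-1})$ with $R_1>r-R_2^*$, and your construction cannot reach them, because the Rate Increase Lemma stops at $n=r-R_1^*-R_2^*$ and you never modify $M_2$. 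This is exactly where the paper invokes the Rate Exchange Lemma (Lemma~\ref{lemma:onetoone}): starting from $(R_1',R_2')=(r-R_2^*,\,R_2^*)$, each additional column appended to $M_1'$ is paid for by deleting a column of $M_2$, trading $R_2$ for $R_1$ one-for-one up to $R_1=k_{1-1}$. Your closing remark that the corner is ``governed by'' one facet or the other does not repair this; the missing points form a genuine part of the claimed region, and without the exchange step your proof establishes only a strictly smaller rectangle in that regime.

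A second, smaller gap is the verification at $t_2$. You assert that $t_2$ decodes ``exactly as at $Q_1$'' because $\mathrm{rank}(H_{12}M_2)$ is unchanged, but the matrix relevant to $t_2$ is $H_{21}M_1'$, which has changed: enlarging $M_1$ enlarges the interference subspace seen by $t_2$, so you must re-establish $\mathrm{span}(H_{21}M_1')\cap\mathrm{span}(H_{22}M_2)=\{0\}$. The paper does this using the hypothesis $R_1^*=k_{1-2}$: since $t_2$ decodes $X_1$ at rate $k_{1-2}=\mathrm{rank}(H_{21})$ at $Q_1$, one has $\mathrm{span}(H_{21}M_1')\subseteq\mathrm{span}(H_{21})=\mathrm{span}(H_{21}M_1)$, and the zero-intersection at $Q_1$ carries over. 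You need this observation (it is also where the assumption $k_{1-2}\leq k_{1-1}$ earns its keep); as written, your justification references the wrong transfer matrix and does not close this step.
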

Note that in the above characterization, the sum rate constraint depends on  $rank([H_{11}~H_{12}M_2])$; we show a lower bound on $rank([H_{11}~H_{12}M_2])$ in \ref{sec:lb}. The following lemma that discusses situations in which rates can be traded off between the two unicast sessions is needed for the proof of Thm. \ref{th:caseB}.


\begin{lemma}
\label{lemma:onetoone}  \textit{Rate Exchange Lemma.} Given that
$ rank([H_{11}M_1~~H_{12}M_2])=rank([H_{11}~H_{12}M_2])=r$, where
$M_1$ is a full rank matrix of dimension $k_{1-12}\times (r-R_2)$,
$M_2$ is a full rank matrix of dimension $k_{2-12}\times R_2$. If
$M'_1=[~\vec{\alpha}~~M_1~]$ where $\vec{\alpha}$ is a vector of
length $k_{1-12}$ and $rank(H_{11}M'_1)=r-R_2+1$, then there exists
an $M'_2$ such that $span(H_{11}M'_1)\cap span(H_{12}M'_2)=\{0\}$
where $M'_2$ is a full rank submatrix of $M_2$ of dimension $k_{2-12}\times
(R_2-1)$.
\end{lemma}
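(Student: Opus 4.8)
The plan is to exploit the fact that adjoining one column to $M_1$ creates exactly a one-dimensional ``collision'' with $span(H_{12}M_2)$, and that this collision can be cancelled by deleting a single, carefully chosen column of $M_2$ — so no enlargement of the field is needed.

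First I would unpack the hypotheses. Since $[H_{11}M_1~~H_{12}M_2]$ has $(r-R_2)+R_2=r$ columns and rank $r$, all of its columns are independent: $rank(H_{11}M_1)=r-R_2$, $rank(H_{12}M_2)=R_2$, the columns of $H_{12}M_2$ form a basis of $span(H_{12}M_2)$, and $span(H_{11}M_1)\cap span(H_{12}M_2)=\{0\}$, so $span([H_{11}M_1~~H_{12}M_2])=span(H_{11}M_1)\oplus span(H_{12}M_2)$ has dimension $r$. Because $span(H_{11}M_1)\subseteq span(H_{11})$ and $rank([H_{11}~~H_{12}M_2])=r$ as well, this $r$-dimensional space must coincide with $span([H_{11}~~H_{12}M_2])$; hence $span(H_{11})\subseteq span(H_{11}M_1)\oplus span(H_{12}M_2)$.

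Next I would analyse $M'_1=[\vec{\alpha}~~M_1]$. From $rank(H_{11}M'_1)=r-R_2+1>rank(H_{11}M_1)$, the vector $H_{11}\vec{\alpha}$ is not in $span(H_{11}M_1)$, and by the previous step it decomposes as $H_{11}\vec{\alpha}=u+v$ with $u\in span(H_{11}M_1)$, $v\in span(H_{12}M_2)$, and $v\neq 0$ (else $H_{11}\vec{\alpha}=u\in span(H_{11}M_1)$). Thus $span(H_{11}M'_1)=span(H_{11}M_1)\oplus span(v)$. Writing $v=\sum_{j=1}^{R_2}\mu_j c_j$ in the basis $c_1,\dots,c_{R_2}$ given by the columns of $H_{12}M_2$, I would pick an index $i$ with $\mu_i\neq 0$ and let $M'_2$ be $M_2$ with its $i$-th column deleted; it has dimension $k_{2-12}\times(R_2-1)$ and is full rank as a column-submatrix of the full-column-rank matrix $M_2$, and $v\notin span(H_{12}M'_2)=span(\{c_j:j\neq i\})$ by uniqueness of coordinates in a basis.

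Finally I would verify the intersection is trivial: if $w=a+\lambda v\in span(H_{11}M'_1)$ with $a\in span(H_{11}M_1)$ and scalar $\lambda$ also lies in $span(H_{12}M'_2)\subseteq span(H_{12}M_2)$, then $a=w-\lambda v\in span(H_{11}M_1)\cap span(H_{12}M_2)=\{0\}$, so $w=\lambda v\in span(H_{12}M'_2)$, which forces $\lambda=0$ and $w=0$. The only mildly delicate point is that \emph{both} rank hypotheses are genuinely used: $rank([H_{11}M_1~~H_{12}M_2])=r$ supplies the direct-sum structure, whereas $rank([H_{11}~~H_{12}M_2])=r$ is precisely what keeps $H_{11}\vec{\alpha}$ inside $span(H_{11}M_1)\oplus span(H_{12}M_2)$, so that the component $v$ is well defined and lies in $span(H_{12}M_2)$; once that is secured the ``exchange'' — trading one column of $M_2$ for the extra column of $M_1$ — is immediate.
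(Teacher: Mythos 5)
Your proof is correct and complete. The paper itself defers the proof of this lemma to its supplementary reference, so a line-by-line comparison is not possible, but your argument is the natural one: the two rank hypotheses give the direct-sum decomposition $span([H_{11}\,H_{12}M_2])=span(H_{11}M_1)\oplus span(H_{12}M_2)$, which traps $H_{11}\vec{\alpha}=u+v$ with a nonzero component $v\in span(H_{12}M_2)$, and deleting a column of $M_2$ whose coordinate in the expansion of $v$ is nonzero restores the trivial intersection. All the supporting claims (full column rank of the submatrix $M'_2$, $v\notin span(H_{12}M'_2)$ by uniqueness of basis coordinates, and the final intersection check) are verified correctly, and you rightly note that the construction needs no field extension.
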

{\it Proof of Theorem \ref{th:caseB}.} Given that $k_{1-2}+k_{2-1}\geq
\min(k_{12-1},k_{12-2})$ and $k_{1-2}\leq k_{1-1}$, we will extend
the rate region from $Q_1$ where $R^*_1=k_{1-2}$,
$R^*_2=\min(k_{12-1},k_{12-2})-k_{1-2}$. At $Q_1$, we need to increase
$R_1$ while keeping $R_2$ as large as possible. By the Rate Increase Lemma, we can achieve the rate point
$R'_1=rank([H_{11}~~H_{12}M_2])-R^*_2$,
$R'_2=R^*_2$. The corresponding encoding
matrices are $M'_1$ and $M_2$.
When we want to further increase $R'_1$, we could use the Rate Exchange Lemma repeatedly. Hence, when $R'_1$ is increased by $\delta$, $R'_2$ is decreased by $\delta$
where $0\leq\delta\leq
\min(R^*_2,k_{1-1}-R'_1)$ ($\delta\leq
k_{1-1}-R'_1$ comes from the fact that $R'_1$ can be increased to at
most $k_{1-1}$).
Terminal $t_1$ can decode messages both from $s_1$ at rate
$R''_1=R'_1+\delta$ and $s_2$ at rate $R''_2=R'_2-\delta$.
Denote the new set of encoding matrices as $M''_1$ and $M''_2$.

At $t_2$, because $M''_2$ is a submatrix of $M_2$,
$span(H_{22}M''_2)\subseteq span(H_{22}M_2)$. Furthermore, we have
$span(H_{21}M''_1)\subseteq span(H_{21})=span(H_{21}M_1)$, since
$R^*_1 = k_{1-2}$. Hence, from the above argument, we will have
$span(H_{21}M''_1)\cap span(H_{22}M''_2)=\{0\}$ since
$span(H_{21}M_1)\cap span(H_{22}M_2)=\{0\}$. Then by Lemma
\ref{lemma:partialDecode}, we can decode at $R''_2=R'_2-\delta$
from $s_2$, but not decode any messages from $s_1$. \endproof

A similar analysis for $Q_2$ allows us to increase $R_2$, resulting in the following extended region.


\begin{corollary}
Given a cut vector, if $k_{1-2}+k_{2-1}\geq \min(k_{12-1},k_{12-2})$
and $k_{2-1}\leq k_{2-2}$, then the rate pair in the following
region can be achieved.

\noindent \underline{Region 3:}
\begin{align*}
R_1&\leq \min(k_{12-1},k_{12-2})-k_{2-1},\\
R_2&\leq k_{2-2},\\
R_1+R_2&\leq rank([H_{21}M_1~~H_{22}]).
\end{align*}
\end{corollary}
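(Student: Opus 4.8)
The plan is to transcribe the proof of Theorem~\ref{th:caseB} after interchanging the two unicast sessions --- relabel $s_1\leftrightarrow s_2$, $t_1\leftrightarrow t_2$, and correspondingly $H_{11}\leftrightarrow H_{22}$, $H_{12}\leftrightarrow H_{21}$, $M_1\leftrightarrow M_2$, $k_{1-2}\leftrightarrow k_{2-1}$, $k_{1-1}\leftrightarrow k_{2-2}$, $k_{12-1}\leftrightarrow k_{12-2}$ --- and run the rate-extension argument starting from the boundary point $Q_2$ instead of $Q_1$. The first step is to record that, under the hypotheses $k_{1-2}+k_{2-1}\ge\min(k_{12-1},k_{12-2})$ and $k_{2-1}\le k_{2-2}$, the rate pair at $Q_2$ is $R^{**}_2=k_{2-1}$ and $R^{**}_1=\min(k_{12-1},k_{12-2})-k_{2-1}$; this is the mirror image of the computation of $Q_1$ carried out at the start of the high-interference subsection, with $k_{2-1}\le k_{2-2}$ now playing the role of $k_{1-2}\le k_{1-1}$, and using $k_{12-1}\le k_{1-1}+k_{2-1}$ to see that $\min(k_{12-1},k_{12-2})-k_{2-1}\le\min(k_{1-2},k_{1-1})$, so that the second term in the min defining $R^{**}_1$ is the binding one. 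Since $Q_2\in\mathcal{B}$, the multicast code supplies encoding matrices $M_1,M_2$ with $rank(H_{i1}M_1)=R^{**}_1$, $rank(H_{i2}M_2)=R^{**}_2=k_{2-1}$ and $span(H_{i1}M_1)\cap span(H_{i2}M_2)=\{0\}$ for $i=1,2$; in particular, because $R^{**}_2=k_{2-1}$ is the full rank of $H_{12}$, we have $span(H_{12}M_2)=span(H_{12})$.

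Next I would apply the Rate Increase Lemma (Lemma~\ref{lemma:in}) in the mirrored orientation, with $H_{22}$ in the role of $H_{11}$ and $H_{21}M_1$ in the role of $H_{12}M_2$; its hypothesis $rank([H_{21}M_1~~H_{22}])\ge R^{**}_1+R^{**}_2$ holds because $[H_{21}M_1~~H_{22}M_2]$ already has full column rank at $Q_2$. This augments $M_2$ into $\bar M_2$ and reaches the rate point $R'_1=R^{**}_1$, $R'_2=rank([H_{21}M_1~~H_{22}])-R^{**}_1$, with $t_2$ still decoding both sources; note $R'_2\le k_{2-2}$ since $rank([H_{21}M_1~~H_{22}])\le R^{**}_1+k_{2-2}$. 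To push $R_2$ further I would invoke the Rate Exchange Lemma (Lemma~\ref{lemma:onetoone}), again with the indices swapped, repeatedly: each application appends one column to the $s_2$ side and deletes one column from the $s_1$ side, so that for any $0\le\delta\le\min(R^{**}_1,\,k_{2-2}-R'_2)$ one obtains encoding matrices $M''_2,M''_1$ with $span(H_{22}M''_2)\cap span(H_{21}M''_1)=\{0\}$ and $H_{22}M''_2$ of full column rank, hence $t_2$ decodes $V''_2X_2$ at rate $R''_2=R'_2+\delta$ and $V''_1X_1$ at rate $R''_1=R'_1-\delta$. Since the exchange conserves $R_1+R_2$, along these points $R''_1+R''_2=rank([H_{21}M_1~~H_{22}])$, and along the earlier Rate-Increase points $R_1+R_2$ never exceeds this value.

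It then remains to verify that $t_1$ still recovers $V''_1X_1$, which is the mirror of the closing step of the proof of Theorem~\ref{th:caseB}: $M''_1$ is a submatrix of $M_1$, so $span(H_{11}M''_1)\subseteq span(H_{11}M_1)$, while $span(H_{12}M''_2)\subseteq span(H_{12})=span(H_{12}M_2)$ by the full-rank observation from the first paragraph; therefore $span(H_{11}M''_1)\cap span(H_{12}M''_2)\subseteq span(H_{11}M_1)\cap span(H_{12}M_2)=\{0\}$, and Lemma~\ref{lemma:partialDecode} lets $t_1$ decode $X_1$ (but not $X_2$). Collecting the constraints on $(R''_1,R''_2)$ as the Rate-Increase parameter and the exchange parameter $\delta$ range over their allowed values --- and noting that any rate pair dominated by an achievable one is itself achievable by shrinking the precoders $V_i$ --- traces out exactly Region~3: $R_1\le\min(k_{12-1},k_{12-2})-k_{2-1}$, $R_2\le k_{2-2}$, $R_1+R_2\le rank([H_{21}M_1~~H_{22}])$.

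I expect the only real work to lie in the first step: because the paper's development is not perfectly symmetric (it fixes the global assumption $k_{1-2}\le k_{1-1}$ and extends from $Q_1$), one cannot simply quote ``by symmetry,'' and must re-check that each hypothesis of Lemmas~\ref{lemma:in} and \ref{lemma:onetoone} survives the index swap when the extension is anchored at $Q_2$. The delicate point is the equality $R^{**}_2=k_{2-1}$ --- this is precisely where $k_{2-1}\le k_{2-2}$ (rather than $k_{1-2}\le k_{1-1}$) enters, and it is what guarantees $span(H_{12}M_2)=span(H_{12})$ and hence makes the final $t_1$-decoding argument valid. Once that is in hand, the remainder is a line-by-line mirror of the proof of Theorem~\ref{th:caseB}.
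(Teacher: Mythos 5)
Your proof is correct and is exactly the argument the paper intends: the paper disposes of this corollary with the single remark that ``a similar analysis for $Q_2$'' applies, and your write-up is a faithful, explicitly checked transcription of the proof of Theorem~\ref{th:caseB} under the index swap, including the key verifications that $R^{**}_2=k_{2-1}$ (so that $span(H_{12}M_2)=span(H_{12})$) and that the hypotheses of the Rate Increase and Rate Exchange Lemmas survive anchoring at $Q_2$.
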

The overall rate region is the convex hull of base region, Region 2
and Region 3 which is shown in Fig. \ref{fig:case2im}, where boundary segment $d-f$ is achieved via timesharing.


We note that the idea of increasing one rate while decreasing the
other can also be applied to the region obtained in low interference case. Since
$rank([H_{11}~H_{12}M_2])=k_{12-1}$ and
$rank([H_{21}M_1~H_{22}])=k_{12-2}$, we can obtain the following two
new regions for low interference case.

%

\noindent
\begin{align*}
\text{\underline{Region 2':}} & ~& \text{\underline{Region 3':}} &~\\
R_1&\leq k_{1-1}   &  R_1&\leq k_{1-2}\\
R_2&\leq k_{2-1}   &  R_2&\leq k_{2-2}\\
R_1+R_2&\leq k_{12-1} & R_1+R_2&\leq k_{12-2}
\end{align*}


Finally, the achievable rate region for low interference case is the convex hull of
the region 1, 2' and 3' shown in Fig.
\ref{fig:rate_region3}, where the boundary segment $d-f$ and $f-c$ is achieved via timesharing.

\subsubsection{Lower bound of $rank([H_{11}~H_{12}M_2])$}
\label{sec:lb}
Next, we investigate the lower bound of
$rank([H_{11}~~H_{12}M_2])$. In the following argument, $R^*_1$ and
$R^*_2$ denote the rate at boundary point $Q_1$, and $M_1$ and $M_2$
denote the corresponding encoding matrices. First note that
$rank([H_{11}~H_{12}M_2])\geq rank(H_{11})=k_{1-1}$ and
$rank([H_{11}~H_{12}M_2])\geq rank([H_{11}M_1~H_{12}M_2])=R^*_1+R^*_2$.
Next we will also find another nontrivial lower bound of
$rank([H_{11}~~H_{12}M_2])$ by the following lemma.
\begin{lemma}
\label{lemma:span} Given $rank([H_{11}~H_{12}])=k_{12-1}$,
$rank(H_{12})=k_{2-1}$ and $rank([H_{12}M_2])=l$, we have
$rank([H_{11}~H_{12}M_2])\geq k_{12-1}-k_{2-1}+l$.
\end{lemma}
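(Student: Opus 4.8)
The plan is to exploit the fact that $span(H_{12}M_2)$ is a subspace of $span(H_{12})$, and to track how much "room" is lost when we replace $H_{12}$ by the lower-rank $H_{12}M_2$ inside the concatenation with $H_{11}$. Concretely, I would work entirely with column spans and use the submodularity/inclusion-exclusion identity for dimensions of subspaces: for any two matrices $A,B$,
\begin{equation*}
rank([A~~B]) = rank(A) + rank(B) - \dim\big(span(A)\cap span(B)\big).
\end{equation*}
Applying this with $A = H_{11}$ and $B = H_{12}M_2$ gives
\begin{equation*}
rank([H_{11}~~H_{12}M_2]) = k_{1-1} + l - \dim\big(span(H_{11})\cap span(H_{12}M_2)\big),
\end{equation*}
so the whole lemma reduces to proving the upper bound
\begin{equation*}
\dim\big(span(H_{11})\cap span(H_{12}M_2)\big) \leq l - (k_{12-1} - k_{2-1}) = l + k_{2-1} - k_{12-1}.
\end{equation*}

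The key step is to bound that intersection dimension. First I would apply the same identity to $H_{11}$ and the \emph{full} matrix $H_{12}$: since $rank([H_{11}~~H_{12}]) = k_{12-1}$, $rank(H_{11}) = k_{1-1}$ and $rank(H_{12}) = k_{2-1}$, we get $\dim\big(span(H_{11})\cap span(H_{12})\big) = k_{1-1} + k_{2-1} - k_{12-1}$. Now $span(H_{12}M_2) \subseteq span(H_{12})$, and I want to say that intersecting with the smaller subspace $span(H_{12}M_2)$ can only shrink the intersection, and by at most the codimension of $span(H_{12}M_2)$ inside $span(H_{12})$, which is $k_{2-1} - l$. Formally, for a subspace $W$ and subspaces $U \subseteq V$, one has $\dim(W\cap V) - \dim(W\cap U) \leq \dim V - \dim U$; applying this with $W = span(H_{11})$, $V = span(H_{12})$, $U = span(H_{12}M_2)$ yields
\begin{equation*}
\dim\big(span(H_{11})\cap span(H_{12}M_2)\big) \geq \dim\big(span(H_{11})\cap span(H_{12})\big) - (k_{2-1} - l),
\end{equation*}
which is a lower bound and hence the wrong direction. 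So instead I would bound directly: $span(H_{11})\cap span(H_{12}M_2) \subseteq span(H_{12}M_2)$, and also $span(H_{11})\cap span(H_{12}M_2) \subseteq span(H_{11}) \cap span(H_{12})$. Thus the intersection in question lies inside both $span(H_{12}M_2)$, which has dimension $l$, and inside $span(H_{11})\cap span(H_{12})$, which has dimension $k_{1-1}+k_{2-1}-k_{12-1}$ — but neither of these alone gives what I want. The cleanest route is: inside the ambient space $span(H_{12})$ (dimension $k_{2-1}$), consider the two subspaces $span(H_{11})\cap span(H_{12})$ (dimension $k_{1-1}+k_{2-1}-k_{12-1}$) and $span(H_{12}M_2)$ (dimension $l$); their intersection is exactly $span(H_{11})\cap span(H_{12}M_2)$, so by inclusion-exclusion inside $span(H_{12})$,
\begin{equation*}
\dim\big(span(H_{11})\cap span(H_{12}M_2)\big) \leq (k_{1-1}+k_{2-1}-k_{12-1}) + l - k_{2-1} = k_{1-1} + l - k_{12-1}
\end{equation*}
whenever the two subspaces together span something of dimension at least $k_{2-1}$; in general the bound $\dim(X\cap Y) \leq \dim X + \dim Y - \dim(X+Y)$ combined with $\dim(X+Y)\le k_{2-1}$ gives exactly $\dim(X\cap Y)\ge \dim X+\dim Y-k_{2-1}$, again a lower bound. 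The correct inequality is simply the trivial $\dim(X \cap Y) \le \min(\dim X, \dim Y)$, and picking $\dim X = k_{1-1}+k_{2-1}-k_{12-1}$ does \emph{not} suffice. Therefore the right bound must come from $\dim(X\cap Y) \le \dim Y + (\dim X - \dim(\text{something}))$...

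Let me restate the intended argument more carefully, since this is the crux. Set $W = span(H_{11})$. We want $\dim(W \cap span(H_{12}M_2)) \le l + k_{2-1} - k_{12-1}$, equivalently $\dim(W\cap span(H_{12})) - \dim(W\cap span(H_{12}M_2)) \ge \dim(span(H_{12})) - \dim(span(H_{12}M_2))$, i.e. $k_{1-1}+k_{2-1}-k_{12-1} - \dim(W\cap span(H_{12}M_2)) \ge k_{2-1} - l$. This is the statement that the quotient map $span(H_{12}) \to span(H_{12})/span(H_{12}M_2)$ restricted to $W \cap span(H_{12})$ is \emph{surjective}. That surjectivity is false in general, so the lemma as I have reduced it cannot be proven this way; the actual proof in the paper must use additional structure — namely that $H_{12}M_2 = H_{12}(M_2 V_2)$ arises from a genuine network code in which $[H_{11}~~H_{12}]$ is full rank of the stated dimension, or it must instead bound $rank([H_{11}~~H_{12}M_2])$ below by exhibiting $k_{12-1}-k_{2-1}+l$ explicit independent columns. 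I would pursue the latter: take the $l$ independent columns of $H_{12}M_2$, extend them within $span(H_{12})$ to a basis, and use the full-rank-ness of $[H_{11}~~H_{12}]$ to argue that $H_{11}$ contributes at least $k_{12-1} - k_{2-1}$ further independent directions not in $span(H_{12}) \supseteq span(H_{12}M_2)$; since $rank([H_{11}~~H_{12}]) = k_{12-1}$ and $rank(H_{12}) = k_{2-1}$, the columns of $H_{11}$ must span a complement of dimension $k_{12-1}-k_{2-1}$ to $span(H_{12})$, hence a complement of dimension at least $k_{12-1}-k_{2-1}$ to the smaller $span(H_{12}M_2)$, giving $rank([H_{11}~~H_{12}M_2]) \ge l + (k_{12-1}-k_{2-1})$. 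The main obstacle, and the step I would scrutinize most carefully, is precisely this last claim: that the "$H_{11}$ part sticking out of $span(H_{12})$" is still entirely sticking out of $span(H_{12}M_2)$ — which is immediate because $span(H_{12}M_2)\subseteq span(H_{12})$, so anything outside $span(H_{12})$ is a fortiori outside $span(H_{12}M_2)$. Formally: $rank([H_{11}~~H_{12}M_2]) \ge rank(H_{12}M_2) + \big(rank([H_{11}~~H_{12}]) - rank(H_{12})\big) = l + k_{12-1} - k_{2-1}$, where the middle inequality is $rank([H_{11}~~B]) \ge rank(B) + rank([H_{11}~~H_{12}]) - rank([H_{12}~~B])$ applied with $B = H_{12}M_2$ and using $span(B)\subseteq span(H_{12})$ so $rank([H_{12}~~B]) = rank(H_{12}) = k_{2-1}$. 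This submodular inequality for matrix rank is the one clean fact I would invoke, and it closes the proof.
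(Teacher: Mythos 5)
Your final argument is correct and is essentially the paper's proof: the paper also extends a basis of $span(H_{12})$ by $k_{12-1}-k_{2-1}$ columns of $H_{11}$ (call them $H_{11}'$) whose span meets $span(H_{12})$, and hence the smaller $span(H_{12}M_2)$, only in $\{0\}$, giving $rank([H_{11}~H_{12}M_2])\geq rank([H_{11}'~H_{12}M_2])=k_{12-1}-k_{2-1}+l$. The earlier attempts via bounding $\dim(span(H_{11})\cap span(H_{12}M_2))$ are indeed dead ends, as you correctly recognize, and the closing submodularity inequality $rank([H_{11}~B])\geq rank(B)+rank([H_{11}~H_{12}~B])-rank([H_{12}~B])$ with $rank([H_{12}~B])=k_{2-1}$ is a valid, clean restatement of the same idea.
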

\begin{proof}
By the assumed conditions, there are $k_{2-1}$ columns in $H_{12}$
that are linearly independent, and in $H_{11}$, we can find a subset
of $k_{12-1}-k_{2-1}$ columns denoted $H_{11}'$ such that
$span(H_{11}')\cap span(H_{12})=\{0\}$ and
$rank(H_{11}')=k_{12-1}-k_{2-1}$, which further imply that
$rank([H_{11}'~H_{12}])= k_{12-1}$.


Since $span(H_{12}M_2)\subseteq span(H_{12})$ this means that
$span(H_{11}')\cap span (H_{12}M_2)=\{0\}$. Then
$rank([H_{11}'~H_{12}M_2])=rank(H_{11}')+rank(H_{12}M_2)-0=k_{12-1}-k_{2-1}+l$.
Hence, $rank([H_{11}~H_{12}M_2])\geq
rank([H_{11}'~H_{12}M_2])=k_{12-1}-k_{2-1}+l$.
\end{proof}

Together with the two lower bounds above, we have
$rank([H_{11}~H_{12}M_2])\geq \max(k_{1-1}, k_{12-1}-k_{2-1}+R^*_2,
R^*_1+R^*_2)$. A case where $\max(k_{1-1}, k_{12-1}-k_{2-1}+R^*_2,
R^*_1+R^*_2)=k_{12-1}-k_{2-1}+R^*_2$ is shown in Fig. \ref{fig:case2im}.

\begin{figure}[htbp]
\subfigure[]{\label{fig:case2im}
\includegraphics[width=42mm,clip=false, viewport=70 35 190 120]{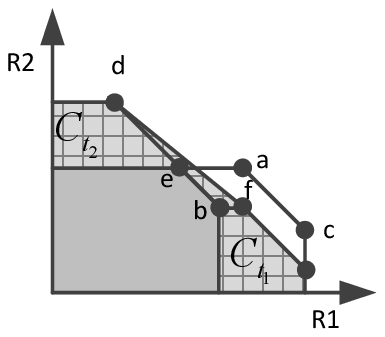}}
\subfigure[]{\label{fig:rate_region3}
\includegraphics[width=41mm,clip=false, viewport=55 60 165 140]{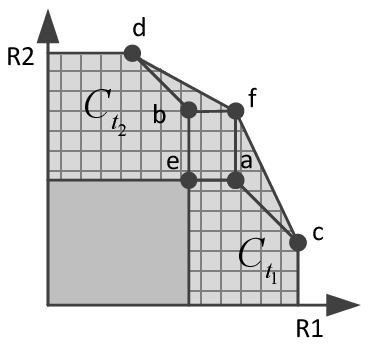}}
\caption{\label{fig:casefinal} (a) The extended rate region for high interference case. (b) The final extended rate region for low interference case. For each point in the shaded grey area, both terminals can recover both the sources. In the hatched grey area, for a given rate point, its $x$-coordinate is the rate for $s_1-t_1$ and its $y$-coordinate is the rate for $s_2-t_2$; the terminals are not guaranteed to decode both sources in this region. }
\vspace{-0.2in}
\end{figure}
\vspace{-0.1in}
\section{Comparison with existing results}
\label{sec:comparison}
The authors in \cite{wangIT10} and \cite{shenvi} explore the case when each source
transmits one symbol at a time, or equivalently, $R_1=R_2=1$ in detail, whereas we allow arbitrary rate pairs. Reference \cite{feder09}, also consider the scenario where the rates are arbitrary. 
Assuming that $k_{2-2}\leq k_{1-1}$, the basic region in \cite{feder09} is
\noindent\text{\underline{Region EF09:}}
\begin{align*}
R_1+2R_2&\leq k_{1-1}\\
R_2&\leq k_{2-2}
\end{align*}
They also extend the region using the knowledge of $k_{1-2}$, $k_{2-1}$ and other cut conditions arising from the network topology (see section IV of \cite{feder09}). 
A comparison between our region and theirs indicates that there are example networks where there exist rate points that belong to our region but not to Region EF09. Conversely, there are instances of networks where points that belong to Region EF09, do not fall within our region. The work of \cite{feder09} can be interpreted in part as an interference nulling scheme, and in future work it may be possible to incorporate this within our approach. The work of \cite{javidi08} considers several different cuts defined in the graph and propose an outer bound for the network capacity. Moreover, they provide certain network structures where the outer bound is tight. Since our work deals with an inner bound, it is qualitatively different. Finally Das et al. \cite{JafarISIT} have used interference alignment for the case of three unicast sessions, and are able to achieve a rate that is half the mincut for each unicast session. While this is an interesting result for a harder problem, the case of two unicast sessions considered here is different since each connection has only one interferer and the alignment problem does not exist. Moreover, achieving half the mincut for each session can be trivially achieved by timesharing in our problem. In that sense a comparison between our results and theirs is not possible.

\vspace{-0.1in}
\bibliographystyle{IEEEtran}
\bibliography{unicast}

\end{document}